%% This file is to be used as a template for your submission. 
%% Rename this file and replace the text with the text of 
%% your manuscript.
%%
%% The standard LaTeX document class "article" is recommended. 
%% Use options letterpaper and 12pt.
\documentclass[a4paper,11pt]{article}
\usepackage{bettera4}

%% This is the recommended preamble for your document.

%% Load De Gruyter specific settings 
%\usepackage{dgjournal}          

%% The mathptmx package is recommended for Times compatible math symbols.
%% Use mtpro2 or mathtime instead of mathptmx if you have the commercially
%% available MathTime fonts.
%% Other options are txfonts (free) or belleek (free) or TM-Math (commercial)
%\usepackage{mathptmx}

%% Use the graphics package to include figures
\usepackage{graphics}

%% Use natbib with these recommended options
%\usepackage[authoryear,comma,longnamesfirst,sectionbib]{natbib} 

%% Extra packages not included by DeGruyter preamble.
% Format URLs properly.
\usepackage{url}
% Paul Taylor's diagrams package.
\usepackage{diagrams}

\newtheorem{theorem}{\sc Theorem}

\newenvironment{proof}{{\sc Proof.}}{}

% Simple abbreviations.
\newcommand{\limab}{\lim_{a \rightarrow -\infty, \; b \rightarrow \infty} \;}
\newcommand{\pred}{\mbox{\textrm{pred}}}
\newcommand{\weak}{\mbox{\textit{weak}}}
\newcommand{\vweak}{\mbox{\textit{very weak}}}

\newcommand{\picname}[1]{images/#1.png}
\newarrow{Corresponds}<---> % Arrow type for use with the diagrams package.

%% Start your document body here
\begin{document}

\date{17 June 2013}

\title{When causation does not imply correlation:\\robust violations of the Faithfulness axiom}

\author{Richard Kennaway, jrk@cmp.uea.ac.uk \\
       School of Computing Sciences\\
       University of East Anglia\\
       Norwich NR4 7TJ, UK}

\maketitle

\begin{abstract}%   <- trailing '%' for backward compatibility of .sty file
We demonstrate that the Faithfulness property that is assumed in much causal analysis is robustly violated for a large class of systems of a type that occurs throughout the life and social sciences: control systems. These systems exhibit correlations indistinguishable from zero between variables that are strongly causally connected, and can show very high correlations between variables that have no direct causal connection, only a connection via causal links between uncorrelated variables. Their patterns of correlation are robust, in that they remain unchanged when their parameters are varied. The violation of Faithfulness is fundamental to what a control system does: hold some variable constant despite the disturbing influences on it. No method of causal analysis that requires Faithfulness is applicable to such systems.
%Current methods of causal inference from observational data depend on some general assumptions: acyclic causation, and the Markov and Faithfulness properties. Various attempts have been made to relax these assumptions so as to cover a wider class of systems, and specifically, both dynamical systems and systems with feedback, i.e.~cyclic causal paths.
%
%We here exhibit a large class of systems, control systems, that robustly violate the Faithfulness property, which requires that causal connections be associated with non-zero correlations.
%Control systems are a subclass of dynamical systems with feedback which have a systematic tendency to produce apparently paradoxical dissociations of correlation from causation. Correlations of zero are typically found between directly causally connected variables, and very high correlations can exist between variables that are only indirectly causally linked.
%The phenomenon is robust: it persists through any small variation of the system parameters.
%
%Dynamical systems and systems with feedback are known to pose problems for causal inference methods, but the difficulties peculiar to control systems have not yet received attention. As control systems are ubiquitous in the life and social sciences, this must have implications for the conduct of causal inference in these areas.
\end{abstract}

\section{Introduction}

The problem of deducing causal information from correlations in observational data is a substantial research area, the simple maxim that ``correlation does not imply causation'' having been superceded by methods such as those set out in \cite{Pearl00,GSS01}, and in shorter form in \cite{pearl2009causal}.  The purpose of this paper is to exhibit a substantial class of systems to which these methods and recently developed extensions of them fail to apply.

The works just cited limit attention to systems whose causal connections form a directed acyclic graph, together with certain further restrictions, and also do not consider dynamical systems or time series data. As such, dynamical systems with feedback simply lie outside their scope.
Attempts have been made to extend these methods towards dynamical systems, and systems with cyclic causation, by relaxing one or more of the basic assumptions. However, among the class of dynamical systems with feedback is a subclass of ubiquitous occurrence in the real world, for which, we argue, no such extension of these methods can succeed.

These are control systems: systems which have been designed, whether by a human designer or by evolution, to destroy the correlations that these causal inference methods work from. In addition, they tend to create high correlations between variables that are only indirectly causally linked. In section~\ref{section-literature} we discuss in detail some of the papers that have attempted such extensions. It is not an accident that in every case the assumptions made, while allowing some dynamical systems with feedback, exclude control systems. We show where control systems violate their assumptions, and analyse the result of applying their methods anyway, exhibiting where they break down.

\section{Preliminaries}

\subsection{Causal inference}

We briefly summarise the concepts of causal inference set out in the works cited above.
For full technical definitions the reader should consult the original sources.

A hypothesis about the causal relationships existing among a set of variables $V$ is assumed to be expressible as a directed acyclic graph $G$ (the \textit{DAG} assumption). An arrow from $x$ to $y$ means that there is a direct causal influence of $x$ on $y$, and its absence, that there is none.  Given such a graph, and a joint probability distribution $P$ over $V$, we can ask whether this distribution is consistent with the causal relationships being exactly as stated by $G$: could this distribution arise from these causal relationships? Besides the DAG assumption, there are two further axioms that are generally required for $P$ to be considered consistent with $G$: the \textit{Markov} condition and the \textit{Faithfulness} property.

$P$ satisfies the Markov condition if it factorises as the product of the conditional distributions $P(V_i | \pred(V_i))$, where $\pred(V_i)$ is the set of immediate predecessors of $V_i$ in $G$. This amounts to the assumption that all of the other, unknown influences on each $V_i$ are independent of each other; otherwise put, it is the assumption that $G$ contains all the variables responsible for all of the causal connections that exist among the variables.  It can be summed up as the slogan ``no correlation without causation''.

The Faithfulness assumption is that no conditional correlation among the variables is zero unless it is necessarily so given the Markov property.  For example, if $G$ is a graph of just two nodes $x$ and $y$ with an arrow from $x$ to $y$, then every probability distribution over $x$ and $y$ has the Markov property, but only those yielding a non-zero correlation between $x$ and $y$ are faithful.  It is not obvious in general which of the many conditional correlations for a given graph $G$ must be zero, but a syntactically checkable condition was given by \cite{Pearl98}, called $d$-separation. Its definition will not concern us here.  Faithfulness can be summed up as the slogan ``no causation without correlation''.

The idea behind Faithfulness is that if there are multiple causal connections between $x$ and $y$, then while it is possible that the causal effects might happen to exactly cancel out, leaving no correlation between $x$ and $y$, this is very unlikely to happen.  Technically, if the distributions $P$ are drawn from some reasonable measure space of possible distributions, then the subset of non-faithful distributions has measure zero.

When these assumptions are satisfied, the correlations present in observational data can be used to narrow the set of causal graphs that are consistent with the data.

The assumptions have all been the subject of debate, but we are primarily concerned here with the Faithfulness assumption.
Attacks on it have been based on the argument that very low correlations may be experimentally indistinguishable from zero, and therefore that one may conclude from a set of data that no causal connection can exist even when there is one.  But, it can be countered, that merely reflects on the inadequate statistical power of one's data set, the response to which should be to collect more data rather than question this axiom.  However, we shall not be concerned with this argument.

Instead, our purpose is to exhibit a large class of \textit{robust} counterexamples to Faithfulness: systems which contain zero correlations that do not become nonzero by any small variation of their parameters, nor by the collection of more data, yet are not implied by the Markov property.  Some of these systems even exhibit large correlations (absolute value above 0.95) between variables that have no direct causal connection, but are only connected by a series of direct links, each of which is associated with correlations indistinguishable from zero. These systems are neither exotic, nor artificially contrived for the sole purpose of being counterexamples. On the contrary, systems of this form are common in both living organisms and man-made systems.

It follows that for these systems, this general method of causal analysis of nonexperimental data cannot be applied, however the basic assumptions are weakened. Interventional experiments are capable of obtaining information about the true causal relationships, but for some of these systems it is paradoxically the \textit{lack} of correlation between an experimentally imposed value for $x$ and the observed value of $y$ that will suggest the \textit{presence} of a causal connection between them.

\subsection{Zero correlation between a variable and its derivative}\label{section-zeroderiv}

As a preliminary to the main results in the following sections, we consider the statistical relation between a function, stochastic or deterministic, and its first derivative. In the appendix we demonstrate that under certain mild boundedness conditions, the correlation between a differentiable real function and its first derivative is zero. (The obvious counterexample of $e^x$, identical to its first derivative, violates these conditions.)

An example of a physical system with two variables, both bounded, one being the derivative of the other is that of a voltage source connected across a capacitor. The current $I$ is related to the voltage $V$ by $I = C\, dV/dt$, $C$ being the capacitance.
If $V$ is the output of a laboratory power supply, its magnitude continuously variable by turning a dial, then whatever the word ``causation'' means, it would be perverse to say that the voltage across the capacitor does not cause the current through it. Within the limits of what the power supply can generate and the capacitor can withstand, $I$ can be caused to follow any smooth trajectory by suitably and smoothly varying $V$. The voltage is differentiable, so by Theorem~\ref{theorem-zerocorr-interval} of the Appendix, on any finite interval in which the final voltage is the same as the initial, $c_{V,I}$ is zero. By Theorem~\ref{theorem-zerocorr-realline}, the boundedness of the voltage implies that the same is true in the limit of infinite time.

This is not a merely fortuitous cancelling out of multiple causal connections. There is a single causal connection, the physical mechanism of a capacitor. The mechanism deterministically relates the current and the voltage. (The voltage itself may be generated by a stochastic process.)  Despite this strong physical connection, the correlation between the variables is zero.

Some laboratory power supplies can be set to generate a constant current instead of a constant voltage. When a constant current is applied to a capacitor, the mathematical relation between voltage and current is the same as before, but the causal connection is reversed: the current now causes the voltage.  Within the limits of the apparatus, any smooth trajectory of voltage can be produced by suitably varying the current.

It can be argued that the reason for this paradox is that the product-moment correlation is too insensitive a tool to detect the causal connection.  For example, if the voltage is drawn from a signal generator set to produce a sine wave, a plot of voltage against current will trace a circle or an axis-aligned ellipse.  One can immediately see from such a plot that there is a tight connection between the variables, but one which is invisible to the product-moment correlation.
A more general measure, such as mutual information, would reveal the connection.

However, let us suppose that $V$ is not generated by any periodic source, but varies randomly and smoothly, with a waveform such as that of Figure~\ref{figure-VI}(a). This waveform has been designed to have an autocorrelation time of 1 unit: the correlation between $V(t)$ and $V(t+\delta)$ is zero whenever $|\delta| >= 1$. (It is generated as the convolution of white noise with an infinitely differentiable function which is zero outside a unit interval.\footnote{We used Matlab to perform all the simulations and plot the graphs. The source code is available as supplementary material.})
Choosing the capacitance $C$, which is merely a scaling factor, such that $V$ and $I$ have the same standard deviation, the resulting current is shown in Figure~\ref{figure-VI}(b). A plot of voltage against current is shown in Figure~\ref{figure-VI}(c).  One can clearly see trajectories, but it is not immediately obvious from the plot that there is a simple relation between voltage and current and that no other unobserved variables are involved.  If we then sample the system with a time interval longer than the autocorrelation time of the voltage source, then the result is the scatterplot of Figure~\ref{figure-VI}(d).  The points are connected in sequence, but each step is a random jump whose destination is independent of its source.
Over a longer time, this sampling produces the scatterplot of Figure~\ref{figure-VI}(e). All mutual information between $V$ and $I$ has now been lost: all of the variables $V_i$ and $I_i$ are close to being independently identically distributed. Knowing the exact values of all but one of these variables gives an amount of information about the remaining one that tends to zero as the sampling time step increases.  All measures of any sort of mutual information or causality between them tend to zero, not merely the correlation coefficient. The only way to discover the relationship between $V$ and $I$ is to measure them on timescales short enough to reveal the short-term trajectories instead of the Gaussian cloud.
\begin{figure}
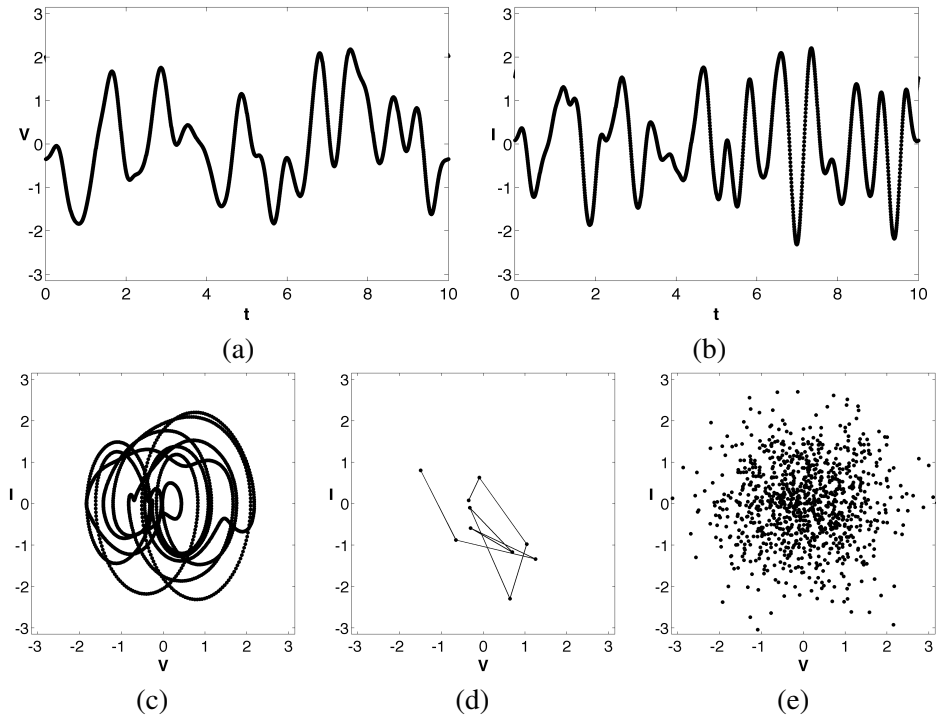

\centering
\begin{tabular}{cc}
\resizebox{!}{4.2cm}{\includegraphics{\picname{fig1pa}}} &
\resizebox{!}{4.2cm}{\includegraphics{\picname{fig1pb}}} \\
(a) & (b) \\
\end{tabular}
\begin{tabular}{ccc}
\resizebox{!}{4.0cm}{\includegraphics{\picname{fig1pc}}} &
\resizebox{!}{4.0cm}{\includegraphics{\picname{fig1pd}}} &
\resizebox{!}{4.0cm}{\includegraphics{\picname{fig1pe}}} \\
(c) & (d) & (e)
\end{tabular}
\caption{Voltage and current related by $I = dV/dt$.
(a) Voltage vs.~time.
(b) Current vs.~time.
(c) Voltage vs.~current.
(d) Voltage vs.~current, sampled.
(e) Voltage vs.~current, sampled for a longer time.
}\label{figure-VI}
\end{figure}

\section{Control systems}

A control system, most generally described, is any device which is able to maintain some measurable property of its environment at or close to some set value, regardless of other influences on that variable that would otherwise tend to change its value. That is a little too general: a nail may serve very well to prevent something from moving, despite the forces applied to it, but we do not consider it to be a control system. Control systems, more usefully demarcated, draw on some energy source to actively maintain the controlled variable at its reference value.
Some everyday examples are a room thermostat that turns heating or cooling mechanisms up and down to maintain the interior at a constant temperature despite variations in external weather, a cruise control maintaining a car at a constant speed despite winds and gradients, and the physiological processes that maintain near-constant deep body temperature in mammals.

The general form of a feedback controller is shown in Figure~\ref{figure-generalcontroller}.
\begin{figure}
\centering
\resizebox{4.86cm}{!}{\includegraphics{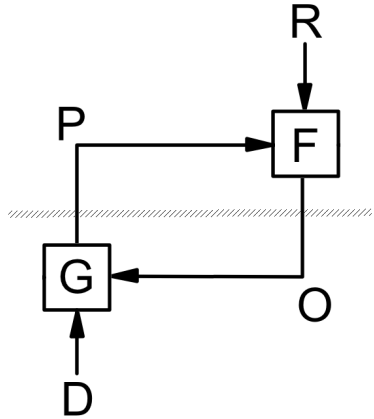}}
\caption{The basic block diagram of any feedback control system. The controller is above the shaded line; its environment (the plant that it controls) is below the line.}\label{figure-generalcontroller}
\end{figure}
\noindent
$\!$The variables have the following meanings:
\begin{description}
\item[$P$:] The controller's perceptual input.  This is a property of the environment, which it is the controller's purpose to hold equal to the reference signal.
\item[$R$:] The reference signal. This is the value that the control system tries to keep $P$ equal to. It is shown as a part of the controller. In an industrial setting it might be a dial set by an operator, or it could be the output of another control system. In a living organism, $R$ will be somewhere inside the organism and may be difficult to discover.
\item[$O$:] The output signal of the controller. This is some function of the perception and the reference (and possibly their past history). This is the action the control system takes to maintain $P$ equal to $R$. Often, and in all the examples of the present paper, $O$ depends only on the difference $R-P$, also called the \textit{error} signal.
\item[$D$:] The disturbance: all of the influences on $P$ besides $O$. $P$ is some function $G$ of the output and the disturbance (and possibly their past history).
\end{description}

We shall now give some very simple didactic examples of control systems, and exhibit the patterns of correlations they yield among $P$, $R$, $O$, and $D$ under various circumstances.

\subsection*{Example 1}

Figure~\ref{figure-integralcontroller} illustrates a simple control system acting within a simple environment, defined by the following equations, all of the variables being time-dependent.
\begin{eqnarray}
\dot{O} & = & k(R - P) \label{eqn-controller}\\
P & = & O + D \label{eqn-environment}
\end{eqnarray}
\begin{figure}
\centering
\resizebox{6cm}{!}{\includegraphics{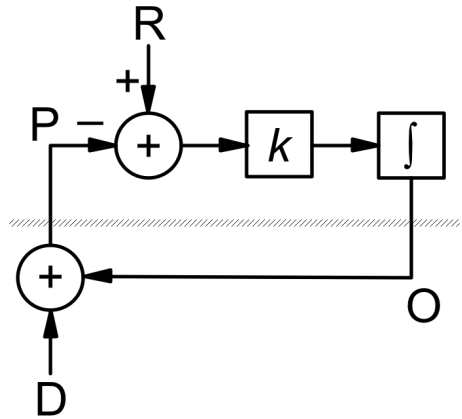}}
\caption{Block diagram of a simple feedback control system.}\label{figure-integralcontroller}
\end{figure}
\noindent
$\!\!$Equation~\ref{eqn-controller} describes an integrating controller, i.e.~one whose output signal $O$ is proportional to the integral of the error signal $R-P$. Equation~\ref{eqn-environment} describes the environment of the controller, which determines the effect that its output action and the disturbing variable $D$ have upon the controlled variable $P$.
In this case $O$ and $D$ add together to produce $P$.
Figure~\ref{figure-controlresponse} illustrates the response to step and random changes in the reference and disturbance. The random changes are smoothly varying with an autocorrelation time of 1 second. The gain $k$ is 100. Observe that when $R$ and $D$ are constant, $P$ converges to $R$ and $O$ to $R-D$.  The settling time for step changes in $R$ or $D$ is of the order of $1/k = 0.01$.
\begin{figure}
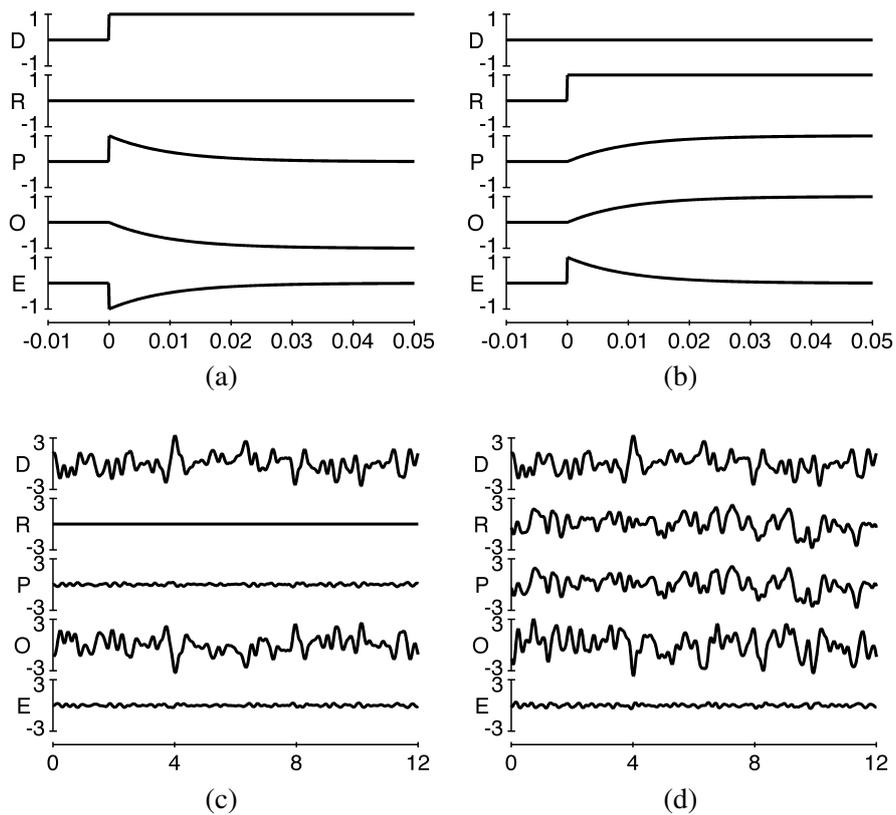

\centering
\begin{tabular}{cc}
\resizebox{!}{4.5cm}{\includegraphics{\picname{fig4pa}}} &
\resizebox{!}{4.5cm}{\includegraphics{\picname{fig4pb}}} \\
(a) & (b) \\
\\
\resizebox{!}{4.5cm}{\includegraphics{\picname{fig4pc}}} &
\resizebox{!}{4.5cm}{\includegraphics{\picname{fig4pd}}} \\
(c) & (d)
\end{tabular}
\caption{Responses of the controller.
(a) Step change in $D$, $R=0$.
(b) Step change in $R$, $D=0$.
(c) $R=0$, randomly varying $D$.
(d) $R$ and $D$ both randomly varying.
}\label{figure-controlresponse}
\end{figure}

The physical connections between $O$, $R$, $P$, and $D$ are as shown in Figure~\ref{figure-integralcontroller}: $O$ is caused by $R$ and $P$, $P$ is caused by $O$ and $D$, and there are no other causal connections. We now demonstrate that the correlations between the variables of this system bear no resemblance to the causal connections.  We generate a smoothly randomly varying disturbance $D$, which varies on a timescale much longer than $1/k$, with a standard deviation of 1 (in arbitrary units).  The reference $R$ is held constant at 0.  Table~\ref{table-ex1-integrating} shows the correlations in a simulation run of 1000 seconds with a time step of 0.001 seconds.
\begin{table}
\[\begin{array}{c|cc}
         &        P &        D \\ \hline
       O &    0.002 &   -0.999 \\
       P &          &   \hfill 0.043 \\
\end{array}\]
\caption{Correlations for an integrating controller (Example 1).}\label{table-ex1-integrating}
\end{table}
The performance of the controller can be measured by its disturbance rejection ratio, $\sigma(D)/\sigma(R-P) = 23.2$.
The numbers vary between runs only in the third decimal place.
For this system, correlations are very high (close to $\pm 1$) exactly where direct causal connection is absent, and close to zero where direct causal connection is present. There is a causal connection between $D$ and $O$, but it proceeds only via $P$, with which neither variable is correlated.

Here is a visual contrast between the causal links and the non-zero pairwise correlations.
\[
\begin{array}{ccc}
\begin{diagram}[size=1.2em,heads=littlevee]
  &      & P \\
  & \ruTo &  & \rdCorresponds \\
D &   &   &   & O
\end{diagram}
& &
\begin{diagram}[size=1.2em]
 &  & P \\
 \\
D & & \hLine^1 & & O
\end{diagram} \\
\mbox{\textit{Causality}} & \hspace{2em} & \mbox{\textit{Non-zero pairwise correlations}}
\end{array}
\]
It is difficult to calculate significance levels for these correlations, since the random waveforms that we generate for $D$ (and in some of our simulations, also for $R$) are deliberately designed to be heavily bandwidth-limited.  They have essentially no energy at wavelengths below about $0.2$ seconds. Successive samples are therefore not independent, and formulas for the standard error of the correlation do not apply. Empirically, if we repeat the simulation many times, we find that the variance of the correlation between two independently generated waveforms like $D$ is proportional to the support interval of the filter that generates them from white noise.  This implies that as one reduces the sampling time step below the point at which there is any new structure to discover, the variance of the correlation will not decrease. When the samples already contain almost all the information there is to find in the signal, increasing the sampling rate cannot yield new information.

All of the simulations presented here were use runs of $10^6$ time steps of $0.001$ seconds, and a coherence time for each of the exogenous random time series of 1 second. For two such series, independently generated, we find a standard deviation of the correlation (estimated from 1000 runs) of 0.023, compared with 0.001 for the same quantity of independent samples of white noise.  Therefore, for the examples reported here, any correlation whose magnitude is below about 0.05 must be judged statistically not significant.\footnote{Significance, in the everyday sense of practical usefulness, deserves a mention. For the practical task of estimating the value of one variable from another it is correlated with, far higher correlations are required. For a bivariate Gaussian, even with a correlation of 0.5, the probability of guessing from the value of one variable just whether the other is above or below its mean is only $67\%$, compared with $50\%$ for blind guessing. To be right 9 times out of 10 requires a correlation of 0.95. To estimate its value within half a decile 9 times out of 10 takes a correlation of 0.995. And to be sure from a finite sample that the correlation really is that high would require its true value to be even higher.}
The correlations observed here between $D$ and both $O$ and $P$ are thus indistinguishable from zero. These correlations are approximately summarised in Table~\ref{table-ex1-integrating-rounded}.
\begin{table}
\[\begin{array}{c|cc}
         &        P &        D \\ \hline
       O &        0 &       -1 \\
       P &          &        0 \\
\end{array}\]
\caption{Rounded correlations (Example 1).}\label{table-ex1-integrating-rounded}
\end{table}
This behaviour is quite different from that of a passive equilibrium system, such as a ball in a bowl (or something nailed down, which is a similar situation with a much higher spring constant). In the latter system, if we identify $D$ with an external force applied to the ball, $P$ with its position, and $O$ with the component of gravitational force parallel to the surface of the bowl, we will find (assuming some amount of viscous friction, and a measurement timescale long enough for the system to always be in equilibrium), that $O$ and $P$ are both proportional to $D$. There will also be a steady-state error.  This is not the case for the control system above, which has zero steady-state error. Given any constant value of $D$, $O$ will approach a value proportional to $D$ while $P$ tends to zero with time.

Real control systems often have to deal with some amount of transport lag in the environment, which we can model by changing Equation~\ref{eqn-environment} to $P(t) = O(t-\lambda) + D(t)$, where $\lambda$ is the amount of time delay. Transport lags are common in control systems in which the environment literally transports a substance $O$ from where the controller produces it to where it affects $P$. Examples abound in chemical process engineering and in biological systems.  This particular control system will only be stable in the presence of lag if its gain is below about $1/\lambda$.  When this is so, the correlations and rejection ratio are little affected by the presence of lag.  This remains true if correlations are calculated between lagged and unlagged quantities.

\subsection*{Example 2}

If we modify Example 1 by letting $R$ vary in the same manner as $D$, but independently from it, the correlations are now as shown in Tables~\ref{table-ex2-integrating} and~\ref{table-ex2-integrating-rounded}.
\begin{table}
\[\begin{array}{c|cccc}
         &        P &        R &        E &        D \\ \hline
       O &    0.718 &    0.717 &   -0.002 &   -0.718 \\
       P &          &    0.998 &   -0.027 &   -0.031 \\
       R &          &          &   \hfill 0.039 &   -0.032 \\
       E &          &          &          &   -0.024 \\
\end{array}\]
\caption{Correlations for varying $R$ and $D$ (Example 2).}\label{table-ex2-integrating}
\end{table}
\begin{table}
\[\begin{array}{c|cccc}
        &            P &            R &       E &          D \\ \hline
       O&          0.7 &          0.7 &       0 &       -0.7 \\
       P&              &            1 &       0 &          0 \\
       R&              &              &       0 &          0 \\
       E&              &              &         &          0
\end{array}\]
\caption{Rounded correlations (Example 2).}\label{table-ex2-integrating-rounded}
\end{table}
$O$ now has a substantial correlation with every variable except the only one that it directly depends on, $E$, with which its correlation is zero.  In the limit of increasing gain, $P$ and $O$ are almost identical to $R$ and $R-D$ respectively.  Since $R$ and $D$ are independently identically distributed, the correlation of $O$ with $R$ or $D$ tends to $\pm 1/\sqrt{2}$. All but one of the causal links (that from $O$ to $P$) has a correlation of zero; all but one of the non-zero correlations corresponds to a causal link.
\[
\begin{array}{ccc}
\begin{diagram}[size=1.2em,heads=littlevee]
D & \rTo & P &       & \rTo &       & E & \lTo & R \\
  &      &   & \luTo &      & \ldTo \\
  &      &   &       & O 
\end{diagram}
& &
\begin{diagram}[size=1.2em,nohug]
D & \rLine^{-0.7} & O &              & \rLine^{0.7}  &               & R & & E \\
  &               &   & \rdLine_{1}  &               & \ruLine_{0.7} \\
  &               &   &              & P
\end{diagram} \\
\\
\mbox{\textit{Causality}} & \hspace{2em} & \mbox{\textit{Non-zero pairwise correlations}}
\end{array}
\]

\subsection*{Example 3}

In all of the systems discussed so far, there has been no noise---that is, signals of which the experimenter knows nothing except possibly their general statistical characteristics. Some of the variables are randomly generated waveforms, but they are all measured precisely, with no exogenous noise variables.  In the next example, we show that the introduction of modest amounts of noise can destroy some of the correlations among variables whose amplitudes are far larger than the noise.

In Example~1, if we measure the correlation between $O+D$ and $P$, then it will of course be identically 1, and we might consider this correlation to be important.
However, in practice, while the variables $P$, $R$, and $O$ may be accurately measureable, $D$ in general is not: it represents all the other influences on $P$ of any sort whatever, known or unknown. (The control system itself---Equation~\ref{eqn-controller}---does not use the value of $D$ at all. It senses only $P$ and $R$, and controls $P$ without knowing any of the influences on $P$.) To model our partial ignorance concerning $D$, we shall split it into $D_0$, the disturbances that can be practically measured, and $D_1$, the remainder. Let us assume that $D_0$ and $D_1$ are independently randomly distributed, and that the variance of $D = D_0 + D_1$ is ten times that of $D_1$. That is, $90\%$ of the variation of the disturbance is accounted for by the observed disturbances.
The correlations that result in a typical simulation run, with randomly varying $D$ and constant $R$ are listed in Tables~\ref{table-ex3-noisy} and~\ref{table-ex3-noisy-rounded}.

\begin{table}
\[\begin{array}{c|ccccc}
         &        P &    O+D_0 &          D_0 &          D_1 &            D \\ \hline
       O &    0.002 &    0.308 &       -0.947 &       -0.311 &       -0.999 \\
       P &          &    0.132 & \hfill 0.042 & \hfill 0.011 & \hfill 0.043 \\
   O+D_0 &          &          & \hfill 0.012 &       -0.990 &       -0.302 \\
     D_0 &          &          &              &       -0.006 & \hfill 0.948 \\
     D_1 &          &          &              &              & \hfill 0.311 \\
\end{array}\]
\caption{Correlations with noisy measurement (Example 3).}\label{table-ex3-noisy}
\end{table}
\begin{table}
\[\begin{array}{c|ccccc}
        &       P &   O+D_0 &     D_0 &      D_1 &      D \\ \hline
       O&       0 &   \weak &      -1 &   -\weak &      1 \\
       P&         &  \vweak &       0 &        0 &      0 \\
   O+D_0&         &         &       0 &       -1 & -\weak \\
     D_0&         &         &         &        0 &      1 \\
     D_1&         &         &         &          &  \weak
\end{array}\]
\caption{Rounded correlations (Example 3).}\label{table-ex3-noisy-rounded}
\end{table}

When $D_1$ has amplitude zero, the system is identical to the earlier one, for which $O+D_0 = P$.
But when the additional disturbance $D_1$ is added, accounting for only one tenth the variation of $D$,
the correlation between $O+D_0$ and $P$ sinks to a low level. The reason is that the variations in $P$ are much smaller than the noise we have introduced. For this run, the standard deviations were $\sigma(O) = 0.999$, $\sigma(D_0) = 0.953$, $\sigma(D_1) = 0.318$, and $\sigma(P) = 0.046$.
So although the unmeasurable $O+D_0+D_1$ is identical to $P$, the measurable $O+D_0$ correlates only weakly with $P$, and the better the controller controls, the smaller the correlation.

\subsection*{A digression on disturbances}

Example~3 also demonstrates the error of a common naive idea about control systems.
In every functioning control system, $O$ is just the output required to oppose the effects of the disturbances $D$ on $P$. It is sometimes assumed that a control system works by sensing $D$ and calculating the value of $O$ required to offset it. This does not work. It would result in every unsensed disturbance affecting $P$ unopposed, and as we have seen, the controller in Example~3 performs far better than this. It would also require the control system to contain a detailed model of how the output and the disturbances affect $P$---that is, a model of its environment. Any inaccuracies in this model will also produce unopposed influences on $P$.  For Example~4 below, control by calculating $O$ from measurements of $D_O$ and $D_P$ would allow any error in measuring $D_O$ to produce an error in $P$ growing linearly with time, a complete failure of control.

While controllers can be designed that do make some use of an environmental model and sensed disturbances to improve their performance, it remains the case that no control is possible without also sensing the actual variable to be controlled.  Control cannot be any better than the accuracy of that measurement. Sensing or modelling anything else is not a necessity for control.

\subsection*{Example 4}

A slightly different control system is illustrated in Figure~\ref{figure-proportionalcontroller}.
\begin{figure}
\centering
\resizebox{7.15cm}{!}{\includegraphics{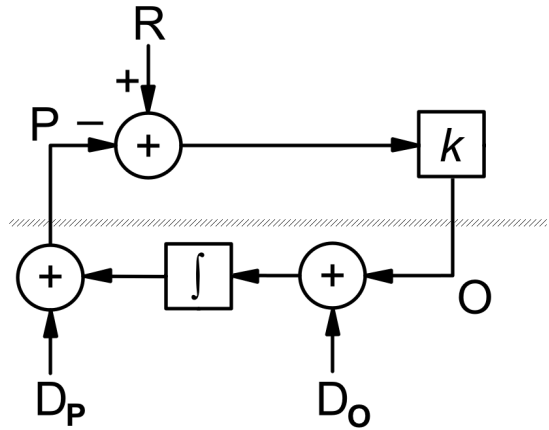}}
\caption{Another simple feedback control system.}\label{figure-proportionalcontroller}
\end{figure}
This is very similar to the previous one, but now the output is proportional to the error and the integrator is part of the environment. Two disturbances are present, $D_O$ adding to the output and $D_P$ adding to the perception. $D_P$ has standard deviation 1, and $D_O$ has a standard deviation chosen to produce the same magnitude of error signal as $D_P$.
These are the equations describing the system:
\[\begin{array}{rclr}
O & = & k(R - P) & \mbox{\hspace{2em}\textit{controller}}\\
P & = & D_P + \int (O + D_O)\,dt & \mbox{\hspace{2em}\textit{environment}}
\end{array}\]
With a constant reference signal, the correlations we obtain for a sample run of this system are shown in Table~\ref{table-ex4a}, and in rounded form in Table~\ref{table-ex4b}.
\begin{table}
\[\begin{array}{c|cccccc}
         &        P &        R &        E &      D_O &      D_P \\ \hline
       O &   -0.041 &    0.034 & \hfill 1.000 &   -0.527 &   -0.031 \\
       P &          &    0.997 &   -0.041 & \hfill 0.008 & \hfill 0.001 \\
       R &          &          & \hfill 0.034 &   -0.032 &   -0.002 \\
       E &          &          &          &   -0.527 &   -0.031 \\
     D_O &          &          &          &          &   -0.006 \\
\end{array}\]\caption{Correlations for a proportional controller (Example 4).}\label{table-ex4a}
\end{table}
The general pattern of correlations is this:
\begin{table}
\[\begin{array}{c|cccccc}
         &        P &        R &        E &       D_O &      D_P \\ \hline
       O &        0 &        0 &        1 &      -0.5 &        0 \\
       P &          &        1 &        0 &         0 &        0 \\
       R &          &          &        0 &         0 &        0 \\
       E &          &          &          &      -0.5 &        0 \\
     D_O &          &          &          &           &        0 \\
\end{array}\]\caption{Rounded correlations (Example 4).}\label{table-ex4b}
\end{table}
Again we see patterns of correlation that do not resemble the causal relationships.
$O$ is proportional to the difference between $P$ and $R$, and is not directly influenced by anything else, but is correlated with neither of them. $D_P$ has no correlation with any other variable. $O$ correlates perfectly with $E$, because $O = k E$ by definition, but the only other variable it has a correlation with is $D_O$, whose causal influence on $O$ proceeds via an almost complete circuit of the control loop.
\[
\begin{array}{ccc}
\begin{diagram}[size=1.2em,heads=littlevee]
R   & \rTo & E    & \rTo  & O \\
    &      & \uTo & \ldTo \\
D_P & \rTo & P    & \lTo  & D_O 
\end{diagram}
& &
\begin{diagram}[size=1.2em,nohug]
R   &         & & E    & \rLine^{\mbox{\hspace{1em}\small $1$}}  & O \\
    & \rdLine^{\mbox{\small $1$}} & &      & \rdLine_{\mbox{\small $-0.5$}} & \dLine_{\mbox{\small $-0.5$}} \\
D_P &         & P    & &         & D_O 
\end{diagram} \\
\\
\mbox{\textit{Causality}} & \hspace{1em} & \mbox{\textit{Non-zero pairwise correlations}}
\end{array}
\]
As with the integral control system, the addition of transport lag of no more than $1/k$ does not change this behaviour.

The apparently paradoxical relationships in Examples~1--4 between causation and correlation can in fact be used to discover the presence of control systems. Suppose that $D_0$ is an observable and experimentally manipulable disturbance which one expects to affect a variable $P$ (because one can see a physical mechanism whereby it should do so), but when $D_0$ is experimentally varied, it is found that $P$ remains constant, or varies far less than expected. This should immediately suggest the presence of a control system which is maintaining $P$ at a reference level. This is called the Test for the Controlled Variable by \cite{Powers74,Powers98}. Something else must be happening to counteract the effect of $D_0$, and if one finds such a variable $O$ that varies in such a way as to do this, then one may hypothesise that $O$ is the output variable of the control system. Further investigation would be required to discover the whole control system: the actual variable being perceived (which might not be what the experimenter is measuring as $P$, but something closely connected with it), the reference $R$ (which is internal to the control system), the actual output (which might not be exactly what the experimenter is observing as $O$), and the mechanism that produces that output from the perception and the reference.
For this test to be effective, the disturbance must be of a magnitude and speed that the control system can handle. Only by letting the control system control can one observe that it is doing so.

A further paradox of control systems also appears here. Unlike the control systems of the earlier examples, this one has a certain amount of steady-state error.  When $R$, $D_O$, and $D_P$ are constant, the steady-state values of $E$ and $O$ are $-D_O/k$ and $-D_O$ respectively.  When the gain $k$ is large, $E$ may be experimentally indistinguishable from zero, while $O$ may be clearly observed.  $E$ is, however, the only immediate causal ancestor of $O$.  For the systems of Examples~1--3, the paradox is even stronger: the steady-state error and output signals are both zero, for any constant values of the disturbances and reference.  Superficial examination of the system could lead one to wrongly conclude that they play no role in its functioning.

\subsection*{Example 5}

In all of the above examples, the disturbances and the reference signal vary only on timescales much longer than the settling time of the control system: 1 second vs.~0.01 seconds.  If we repeat these simulations with signals varying on a much shorter timescale than the settling time, then we find completely different patterns of correlation.
Table~\ref{table-ex5} shows the general form of the results for examples 1--4 with a gain of 1 (giving a settling time of 1 second) and a coherence time for the random signals of 0.1 seconds.

\begin{table}
\begin{center}
\begin{tabular}{ccc}
$\begin{array}{c|cc}
         &        P &        D \\ \hline
       O &        0 &        0 \\
       P &          &        1 \\
\end{array}$
&&
$\begin{array}{c|ccccc}
        &       P &   O+D_0 &     D_0 &      D_1 &      D \\ \hline
       O&       0 &       0 &       0 &        0 &      0 \\
       P&         &       1 &       1 &    \weak &      1 \\
   O+D_0&         &         &       1 &        0 &      1 \\
     D_0&         &         &         &        0 &      1 \\
     D_1&         &         &         &          &      0
\end{array}$
\\
Example 5(1) & & Example 5(3)
\\
\\
$\begin{array}{c|cccc}
        &            P &            R &           E &          D \\ \hline
       O&            0 &            0 &           0 &          0 \\
       P&              &            0 &        -0.7 &          1 \\
       R&              &              & \hfill  0.7 &          0 \\
       E&              &              &             &       -0.7
\end{array}$
&\hspace{2em}&
$\begin{array}{c|cccccc}
         &        P &        R &          E &       D_O &        D_P \\ \hline
       O &     -0.7 &      0.7 &          1 &         0 &       -0.7 \\
       P &          & 0 \hfill &       -0.7 &         0 & \hfill 0.7 \\
       R &          &          & \hfill 0.7 &         0 &          0 \\
       E &          &          &            &         0 &       -0.7 \\
     D_O &          &          &            &           &          0 \\
\end{array}$
\\
Example 5(2) & & Example 5(4)
\end{tabular}
\end{center}\caption{Correlations in the presence of fast disturbances (Example 5).}\label{table-ex5}
\end{table}

Notice that Faithfulness is still violated.  For example, the new version of Example~1 has a zero correlation between $O$ and $P$, although the only causal arcs involving $O$ are with $P$. For this system, $O$ is just the integral of $P$, and so Theorem~\ref{theorem-zerocorr-realline} applies: $O$ and $P$ necessarily have zero correlation regardless of the signal injected as $D$. For the new versions of Examples~2 and~3, $O$ does not correlate with any other variable, and for Example~4, $D_O$ correlates with no other variable.  When a variable correlates with nothing else, no system of causal inference from non-interventional data can reveal a causal connection that includes it.

\subsection*{Summary}

We could go through many more examples of control systems of different architectures and different types of disturbances, but these should be enough to illustrate the general pattern.  The presence of control systems typically results in patterns of correlation among the observable variables that bear no resemblance to their causal relationships.  These patterns are robust: neither varying the parameters of the control systems nor collecting longer runs of data would reveal stronger correlations or diminish the extreme correlations between variables with no direct connection.

We shall later discuss why this is so, but first we shall examine some current work on causal inference
and demonstrate in each case that the authors' hypotheses exclude systems such as the above, and, if applied despite that, that their methods indeed fail to correctly discover the true causal relations.

\section{Current causal discovery methods applied to these examples}\label{section-literature}

When a theorem shows that certain causal information can be obtained from non-experimental observations on some class of systems, and yet no such information is obtainable from the systems we have exhibited, it follows that these systems must lie outside the scope of these theorems.  Here we shall survey some such methods and exhibit where the systems we consider here violate their hypotheses.

Dynamical systems are excluded from all of the causal analyses of both \cite{Pearl00} and~\cite{GSS01}, which do not consider time dependency.  In addition, control systems inherently include cyclic dependencies: the output affects the perception and the perception affects the output.  Control systems therefore fall outside the scope of any method of causal analysis that excludes cycles, and both of these works restrict attention to directed acyclic causal graphs.

\cite{LacerdaSRH08} consider dynamical systems sampled at intervals, and allows for cyclic dependencies, but a condition is imposed that in any equation giving $x_{n+1}$ as a weighted sum of the variables, possibly including $x$, at time $n$, the coefficient of $x_n$ in that sum must be less than 1. This excludes any relation of the form $x = dy/dt$, which in discrete time is approximated by the difference equation $x_{n+1} = x_n + y_n\,\delta t$.
In addition, they recommend sampling such systems on timescales longer than any transient effects. As can be seen from Figure~\ref{figure-VI}(c,d,e), the organised trajectories visible when the system is sampled on a short time scale vanish at longer sampling intervals: only the transient effects reveal anything about the true relation between the variables.
This recommendation thus \textit{rules out} any possibility of discerning causal influences from nonexperimental data in the presence of control systems.

The problems that dynamical systems pose for causal analysis have been considered in terms of the concept of ``equilibration''.
\cite{Iwasaki-1994-CausalityModelAbstraction} demonstrate how the apparent causal relationships in a dynamical system may depend on the timescale at which one views it, the timescale determining which of the feedback loops within the system have equilibrated. (Although he describes a variable which depends on its own derivative as ``self-regulating'', the didactic example that he discusses, of a leaky bathtub, is not a control system.)

\cite{Dash-2003-Caveats,Dash05} considers the interaction of the \textit{Equilibration} operator of \cite{Iwasaki-1994-CausalityModelAbstraction} and the \textit{Do} operator of \cite{Pearl00}, and considers the question of when they commute. He shows that they often do not, and recommends that when this is so, manipulation be performed before equilibration. This amounts to recommending that the system be acted on and sampled on a timescale shorter than its equilibration time, so that transient behaviour may be observed.  Even when this is done, however, the true causal relationships may fail to manifest themselves in the correlations, as shown by the examples of section~\ref{section-zeroderiv}, and in Example~5, where the disturbance is ten times as fast as the controller's response time.
When intermediate timescales of disturbance are applied to the example control systems, some mixture of the equilibrium and non-equilibrium patterns of correlation will be seen.
With more complicated systems of several control systems acting together at different timescales (as in the case of cascade control, where the output of one control system sets the reference of another, usually faster one), the patterns of correlations in the face of rapid disturbances will be merely confusing.
A further moral to be drawn from this is that ``equilibration'' is not necessarily a passive process, like the ball-in-a-bowl described after Example~1, or the bathtub example studied in \cite{Iwasaki-1994-CausalityModelAbstraction,Dash-2003-Caveats,Dash05}. Unlike those systems, control systems typically exhibit very small or zero steady-state error (which is almost their defining characteristic).

\cite{Duvenaud08} are concerned with techniques for making successful predictions rather than learning the correct causal structure.  However, for some of our examples this is impossible even in principle. If a variable has zero correlation with any other variable, conditional on any set of variables, then no information about its value can be obtained from the other variables.  This is the case, for example, with Example~5(1) and the variable $O$, and for the examples of section~\ref{section-zeroderiv}.

\cite{Itani08} propose a method of causal analysis capable of deriving cyclic models.
They first generalise causal graphs to the cyclic case.  Whereas in the acyclic case one can demonstrate that the conditional probability distribution on the whole graph factorises into conditional distributions of each variable given its immediate causes, this is not always so for cyclic graphs.
They therefore consider only graphs for which a generalisation of this holds.
Given the conditional distribution of each variable given its immediate causal predecessors,
a joint distribution of all the variables is said to be \textit{induced} by them if (i) a local version of factorisation holds, and (ii) nodes are independent under $d$-separation. (We refer to that paper for the complete technical definition.)  The requirement that the authors impose is that the conditional distributions must induce a unique global distribution.

This fails for the control systems considered here, because the presence of high correlations between variables connected by paths of low correlation results in non-uniqueness of such a global distribution.
For Example~1, the causal graph is
$D \rightarrow P \;\raisebox{-1mm}{$\stackrel{\textstyle\leftarrow}\rightarrow$}\; O$.
The required condition involves distributions $f(D)$, $f(O;P)$, and $f(P;O,D)$.
The actual global distribution has $O$ and $D$ normally distributed with means of 0, standard deviations of 1, and correlation $-0.997$.  $P$ has distribution $\delta_P(O+D)$, by which we denote the distribution in which all of the probability is concentrated at $P=O+D$.
This implies what the conditional distributions must be: $f(D) = N(0,1)$, $f(O;P) = N(0,1)$, and $f(P;O,D) = \delta_P(O+D)$.
But these are induced by any global distribution of the form $P(O,D) \delta_P(O+D)$, where $P(O,D)$ is a bivariate Gaussian with any correlation, and unconditional standard deviations for $O$ and $D$ of 1.
This happens because the conditional distributions omit any information about the joint distribution of $O$ and $D$, variables which are connected only via a third variable with which they have no correlation.

\cite{Voortman-2010-DifferenceBasedCausality} consider dynamical systems with cyclic dependencies, but their results on the learnability of such systems depend on the Faithfulness assumption, which, the authors note, is violated when there is an equilibrium. Our Examples~1--4 all maintain equilibria, and indeed the authors' Theorem~2 fails to apply to them. However, the examples of section~\ref{section-zeroderiv}, and Example~5 are not systems in equilibrium.  Even those systems violate Faithfulness, and the conclusions of the authors' theorems do not hold for them.

For the dynamical systems of section~\ref{section-zeroderiv}, the property is trivially satisfied, but the same distribution between the two variables---a bivariate Gaussian with zero correlation---would satisfy the assumptions for the causal graph on $V$ and $I$ with no edges.  So long as data are collected at intervals long compared with the coherence time of the waveforms, none of the four possible graphs on two nodes can be excluded.

\cite{zhang2008detection} consider the possibility of making a weaker Faithfulness assumption which is still sufficient to conduct causal analysis.
They demonstrate that Faithfulness implies two properties which they call
\textit{Adjacency Faithfulness} and \textit{Orientation Faithfulness}
and that, while these do not together imply Faithfulness, they are (given the Markov assumption)
all that is necessary for standard methods of causal inference.
They then prove that if Adjacency Faithfulness is satisfied, then Orientation Faithfulness can be tested from the data, obviating the need to assume it.  If Orientation Faithfulness fails the test, then the data are not faithful to any causal graph.
They also find a condition weaker than Adjacency Faithfulness, called \textit{Triangle Faithfulness}, having a similar property: if Triangle Faithfulness is satisfied, Adjacency Faithfulness can be tested from the data. Each of these Faithfulness conditions is a requirement that some class of conditional correlations be non-zero.

The work is restricted to acyclic graphs, so control systems are ruled out on that ground.
However, if we investigate Triangle Faithfulness anyway for our examples,
we find that the only triangle present in any of our causal graphs is the cyclic triangle connecting $P$, $E$, and $O$ in Examples~2 and~4. The three vertexes of this triangle are all non-colliders (i.e.~the causal arrows do not both go towards the vertex).  Triangle Faithfulness requires all of the correlations between any two of these vertexes, conditional on any set of variables not including the third, to be non-zero.
However, in Example~2, six of these twelve correlations are zero. The following table shows the correlations obtained from simulation (round values) between each pair of variables named at the left, conditional on the set of variables shown in the top row.
\[\begin{array}{c|cccc}
   & \mbox{none} &  R & D & RD \\ \hline
OP & 0.7         &  0 & 1 &  1 \\
OE & 0           &  0 & 0 & -1 \\
PE & 0           & -1 & 0 & -1
\end{array}\]
The values that are here $\pm 1$ all result from mathematical identities. For example, $c(OP|D)=1$ because $P = O+D$.
They are nonetheless valid correlations.
In practice, measurement noise would make these correlations slightly smaller, but they will still be extreme.
Correlations more typical of real experiments can be obtained only by assuming gross amounts of measurement error.

In view of the cyclicity of the triangle, we might instead test the Triangle Faithfulness condition that applies in the case of a collider, but we fare no better.
Each of the following correlations would be required to be non-zero.
\newcommand{\undef}{\mbox{\textit{undef}}}
\[\begin{array}{c|cccc}
     & \mbox{none} & R      & D      & RD            \\ \hline
OP|E & 0.7         & \undef(0) & 1      & \undef(0.6) \\
OE|P & 0           & \undef(0) & \undef(0) & \undef(-0.3) \\
PE|O & 0           & -1     & 0      & \undef(-0.6)
\end{array}\]
As before, the $\pm 1$ entries are mathematical identities.
The undefined entries are due to the fact that fixing some of these variables may also fix one or both of the variables whose correlation is being measured.  For example, $c(OP|ER)$ is undefined because fixing $E$ and $R$ fixes $P = R - E$.
Adding measurement noise can make these correlations well-defined (indicated by the parenthetical values), but this creates several more zeroes.
Example~4 gives similar results.

Triangle Faithfulness is therefore not satisfied on any interpretation of how one might apply it to these graphs, and so the failure of Faithfulness for these examples is not detectable by this method.
No approach along these lines can avail for these systems,
because the data generated by each of them are in fact faithful to some causal graph---but in no case are the data faithful to the real causal graph.
%Examples of such causal graphs are these:
%
%\begin{tabular}{ll}
%\centering
%Example~1: & $D \rightarrow O \hspace{2em} P$ \\
%Example~2: & $D \rightarrow O \leftarrow R \leftarrow P \hspace{2em} E$ \\
%Example~3: & $D_1 \rightarrow O \hspace{2em} P$ \\
%Example~4: & $D_O \rightarrow O \rightarrow E \hspace{1em} P \rightarrow R \hspace{2em} D_P$ \\
%Example~5(1): & $D \rightarrow P \hspace{2em} O$ \\
%Example~5(2): & $R \rightarrow E \leftarrow P \leftarrow D \hspace{2em} O$ \\
%Example~5(3): & $D_1 \rightarrow P \hspace{2em} O$ \\
%Example~5(4): & $\begin{array}{ccccccc}D_P & \rightarrow & P & \rightarrow & O & \leftarrow & E \\
%& & & & \uparrow \\
%& & & & R\end{array}$
%\end{tabular}

\section{The fundamental problem}

We have seen that control systems display a systematic tendency to violate Faithfulness, whether they are at equilibrium or not.  Low correlations can be found where there are direct causal effects, and high correlations between variables that are only indirectly causally connected, by paths along which every step shows low correlation.
This follows from the basic nature of what a control system does: vary its output to keep its perception equal to its reference.  The output automatically takes whatever value it needs to, to prevent the disturbances from affecting the perception.  The very function of a control system is to actively destroy the data that current techniques of causal analysis work from.

What every controller does is hold $P$ close to $R$, creating a very strong statistical connection via an indirect causal connection. For constant $R$, variations in $P$ measure the imperfection of the controller's performance---the degree to which it is not doing what it is supposed to be doing.  This may be useful information if one already knows what it is supposed to be doing, as will typically be the case when studying an artificial control system, of a known design and made for a known purpose. However, when studying a biological, psychological, or social system which might contain control systems that one is not yet aware of, correlations between perception and action---in other terminology, input and output, or stimulus and response---must fail to yield any knowledge about how it works.
Current causal analysis methods can only proceed by making assumptions to rule out the problematic cases.  However, these problematic cases are not few, extreme, or unusual, but are ubiquitous in the life and social sciences. For psychology, this point has been made experimentally in \cite{Powers-1978-Spadework,Marken-2011-CausalityCorrelation}.

Control systems also create problems for interventions.  To intervene to set a variable to a given value, regardless of all other causal influences on it, one must act so as to block or overwhelm all those influences.  This is problematic when the variable to be intervened on happens to be the controlled perception of a control system.  One must either act so strongly as to defeat the control system's own efforts to control, or fail to successfully set the perception.  In the former case, the control system is driven into a state atypical of its normal operation, and the resulting observations may not be relevant to finding out how it works in normal circumstances.  In the latter case, all one has really done is to introduce another disturbing variable. One has not so much done surgery on the causal graph as attach a prosthesis.  Arguably, introducing a new variable is the only way of intervening in a system, in the absence of hypotheses about the causal relationships among its existing variables.

When interventions are performed, a lack of appreciation of the phenomena peculiar to control systems can lead to erroneous conclusions. A candle placed near to a room thermostat will not warm it up, but apparently cool the rest of the room. If one does not know the thermostat is there, this will be puzzling.  If one has discovered the furnace, and noticed that the presence of the candle correlates with reduced output from the furnace, one might be led to seek some mechanism whereby the furnace is sensing the presence of the candle.  And yet there is no such sensor: the thermostat senses nothing but its own temperature and the reference setting.  It cannot distinguish between a nearby candle, a crowd of people in the room, or warm weather outside.

To test for the presence of control systems, one must take a different approach, by applying disturbances and looking for variables that remain roughly constant despite there being a clear causal path from the disturbance to that variable (the Test for Controlled Variables). When both a causal path and an absence of causal effect are observed, it is evidence that a control system may be present. If, at the same time, something else changes in such a way as to oppose the disturbance, that is a candidate for the control system's output action.

Discovering exactly what the control system is perceiving, what reference it is comparing it with, and how it generates its output action, may be more difficult to discover. For example, it is easy to demonstrate that mammals control their deep body temperature, less easy to find the mechanism that they sense that temperature with.  The task is made more difficult by the fact that in a well-functioning control system, the error may be so small as to be practically unmeasurable, even though the error is what drives the output action.

\section{Conclusion}

Dynamical systems exhibiting equilibrium behaviour are already known to be problematic for causal inference, although methods have been developed to extend methods of causal inference to include some parts of this class. But the subclass of control systems poses fundamental difficulties which cannot be addressed by any extension along those lines. They specifically destroy the connections between correlation and causation which these methods depend on.

The investigations here have been theoretical.
It remains to be seen how substantially these phenomena affect causal analysis of the increasingly massive data sets that are being gathered from gene expression arrays and high-resolution neuroimaging techniques.

\appendix

\section{Sufficient conditions for zero correlation between a function and its derivative}

Here we demonstrate the absence of correlation between a function satisfying certain weak boundedness conditions and its first derivative.
Before attending to the technicalities, we note that the proofs for both theorems
are almost immediate from the observation that $\int_a^b x\,\dot{x}\,dt = [\frac{1}{2}x^2]_a^b$.

\newcommand{\xab}{\overline{x}_{a,b}}
\newcommand{\xdotab}{\overline{\dot{x}}_{a,b}}
\begin{theorem}\label{theorem-zerocorr-interval}
Let $x$ be a differentiable real function, defined in the interval $[a,b]$,
such that $x(a) = x(b)$.
If $x$ is not constant then the correlation of $x$ and $\dot{x}$ over $[a,b]$ is defined and equal to zero.
\end{theorem}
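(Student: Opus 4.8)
The plan is to work directly from the definition of the product-moment correlation of two functions over $[a,b]$, namely
\[
c_{x,\dot x} = \frac{\int_a^b (x-\xab)(\dot x - \xdotab)\,dt}{\sqrt{\int_a^b (x-\xab)^2\,dt}\;\sqrt{\int_a^b (\dot x - \xdotab)^2\,dt}},
\]
where $\xab$ and $\xdotab$ are the means of $x$ and $\dot x$ over the interval, the common factor $1/(b-a)$ having cancelled between numerator and denominator. The first move is to dispose of the mean of the derivative: by the fundamental theorem of calculus $\int_a^b \dot x\,dt = x(b)-x(a) = 0$, so $\xdotab = 0$. This single identity does most of the work.

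Next I would show that the numerator vanishes. Since $\xdotab = 0$, the numerator expands as
\[
\int_a^b (x-\xab)\dot x\,dt = \int_a^b x\,\dot x\,dt - \xab\int_a^b \dot x\,dt.
\]
The second term is zero because $\int_a^b \dot x\,dt = 0$ once more, and the first term is zero by the observation already noted before the theorem, $\int_a^b x\,\dot x\,dt = [\frac{1}{2}x^2]_a^b = \frac{1}{2}\bigl(x(b)^2 - x(a)^2\bigr)$, which vanishes under the hypothesis $x(a)=x(b)$. Hence the covariance is zero, so the correlation is zero provided its denominator is nonzero and finite.

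It then remains to verify that the correlation is genuinely defined, i.e. that both variance integrals are positive and finite. Finiteness of $\int_a^b (x-\xab)^2\,dt$ is automatic: a function differentiable on the closed bounded interval $[a,b]$ is continuous and hence bounded there. For the derivative I would take the finiteness of $\int_a^b \dot x^2\,dt$ as part of what ``defined'' means (the mild boundedness alluded to earlier in the section). Positivity of the first variance is exactly the statement that $x$ is nonconstant, which is the hypothesis. For the second, note $\int_a^b(\dot x-\xdotab)^2\,dt = \int_a^b \dot x^2\,dt$, and this is zero only if $\dot x$ vanishes almost everywhere; assuming $\dot x$ is integrable, $x(t)-x(a)=\int_a^t \dot x\,ds = 0$ for all $t$, forcing $x$ constant and contradicting the hypothesis.

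The main obstacle, such as it is, lies entirely in this last verification rather than in the computation. One must argue carefully that a nonconstant differentiable $x$ with $x(a)=x(b)$ cannot have $\dot x = 0$ almost everywhere, and one must be explicit about the integrability being assumed so that ``defined'' is meaningful. The algebraic heart of the result---that both $\xdotab$ and $\int_a^b x\,\dot x\,dt$ collapse to boundary terms that cancel under $x(a)=x(b)$---is, as the author remarks, immediate.
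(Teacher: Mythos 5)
Your proof is correct and follows essentially the same route as the paper's: both arguments rest on the two boundary-term identities $\int_a^b \dot{x}\,dt = x(b)-x(a)$ and $\int_a^b x\,\dot{x}\,dt = \frac{1}{2}(x(b)^2-x(a)^2)$, the only cosmetic difference being that you cancel the cross term $\xab\int_a^b\dot{x}\,dt$ explicitly where the paper normalises $\xab$ to zero at the outset. Your closing discussion of why the denominator is positive (and the integrability needed to rule out $\dot{x}=0$ almost everywhere with $x$ nonconstant) is slightly more careful than the paper's one-line assertion, but it is the same proof.
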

\begin{proof}
Write $\xab$ and $\xdotab$ for the means of $x$ and $\dot{x}$ over $[a,b]$.
By replacing $x$ by $x - \xab$ we may assume without loss of generality that $\xab$ is zero.
$\xdotab$ must exist and equal zero, since
\[
\xdotab \;\; = \;\; \frac{1}{b-a} \int_a^b \dot{x} \,dt
\;\; = \;\; \frac{x(b)-x(a)}{b-a}
\;\; = \;\; 0
\]
The correlation between $x$ and $\dot{x}$ over $[a,b]$ is defined by:
\begin{eqnarray*}
c_{x,\dot{x}} & = &
\frac{\frac{1}{b-a} \int_a^b x \, \dot{x} \, dt}%
{\sqrt{(\frac{1}{b-a} \int_a^b x^2 \, dt)
\;
(\frac{1}{b-a} \int_a^b \dot{x}^2 \, dt)}} \\
& = &
\frac{(x(b)^2 - x(a)^2)/2}%
{\sqrt{(\int_a^b x^2 \, dt)
\;
(\int_a^b \dot{x}^2 \, dt)}}
\end{eqnarray*}
The numerator is zero and the denominator is positive (since neither $x$ nor $\dot{x}$ is identically zero).
Therefore $c_{x,\dot{x}} = 0$.
\end{proof}

\begin{theorem}\label{theorem-zerocorr-realline}
Let $x$ be a differentiable real function.
Let $\overline{x}$ and $\overline{\dot{x}}$ be the averages of $x$ and $\dot{x}$ over the whole real line.
If these averages exist, and if the correlation of $x$ and $\dot{x}$ over the whole real line exists, then the correlation is zero.
\end{theorem}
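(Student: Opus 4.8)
The plan is to obtain Theorem~\ref{theorem-zerocorr-realline} as a limiting form of the computation in Theorem~\ref{theorem-zerocorr-interval}, with all the real work concentrated in a single analytic lemma about two-sided averages. First I would pin down the definitions: over the whole line the correlation is
\[
c_{x,\dot{x}} = \frac{\overline{x\dot{x}} - \overline{x}\,\overline{\dot{x}}}{\sigma_x\,\sigma_{\dot{x}}},
\]
where $\sigma_x^2 = \overline{x^2} - \overline{x}^2$, $\sigma_{\dot{x}}^2 = \overline{\dot{x}^2} - \overline{\dot{x}}^2$, and each bar denotes the average $\limab \frac{1}{b-a}\int_a^b(\cdot)\,dt$. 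I read the hypothesis ``the correlation exists'' as: all these averages exist and are finite and $\sigma_x,\sigma_{\dot{x}}$ are strictly positive, so the denominator is a fixed nonzero constant. It is then enough to show that the covariance in the numerator is zero.

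Using $\int_a^b x\dot{x}\,dt = \frac{1}{2}[x^2]_a^b$ we get $\overline{x\dot{x}} = \limab \frac{x(b)^2 - x(a)^2}{2(b-a)}$, while directly $\overline{\dot{x}} = \limab \frac{x(b)-x(a)}{b-a}$. Both are instances of $\overline{\dot{f}} := \limab \frac{f(b)-f(a)}{b-a}$, for $f = x^2$ and $f = x$. I would therefore prove once and for all a \emph{Lemma}: if $f$ is differentiable, its average $\overline{f}$ exists and is finite, and $\overline{\dot{f}}$ exists, then $\overline{\dot{f}} = 0$. Applying it to $f = x$, whose average $\overline{x}$ is finite by hypothesis, yields $\overline{\dot{x}} = 0$; applying it to $f = x^2$ (legitimate since $\overline{x^2} = \sigma_x^2 + \overline{x}^2$ is finite and $\overline{\dot{(x^2)}} = 2\overline{x\dot{x}}$ exists because the covariance does) yields $\overline{x\dot{x}} = 0$. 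Hence the numerator is $\overline{x\dot{x}} - \overline{x}\,\overline{\dot{x}} = 0 - \overline{x}\cdot 0 = 0$, and dividing by the fixed positive $\sigma_x\sigma_{\dot{x}}$ gives $c_{x,\dot{x}} = 0$.

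The hard part is the Lemma, because of the genuinely two-variable limit: one is not free to hold $a$ fixed while $b\to\infty$. The intuition is that $\overline{\dot{f}} = L \neq 0$ forces $f$ to grow essentially linearly, $f(t)\approx Lt$, whose two-sided average behaves like $\frac{L}{2}(a+b)$ and cannot converge. To make this rigorous I would argue by contradiction. Suppose $L>0$ and choose $M$ with $\frac{f(b)-f(a)}{b-a} > \frac{L}{2}$ whenever $a \leq -M$ and $b \geq M$. Fixing the admissible value $a=-M$ gives the one-sided lower bound $f(b) \geq \frac{L}{2}b + C$ for all $b \geq M$. I would then compare $\frac{1}{b-a}\int_a^b f\,dt$ along two scalings of the endpoints that both send $a\to-\infty$ and $b\to\infty$, namely $(a,b) = (-s,s)$ and $(a,b) = (-s,\lambda s)$ with $\lambda > 1$: the extra contribution $\int_s^{\lambda s} f\,dt \geq \frac{L(\lambda^2-1)}{4}s^2 + O(s)$ forces the second average to diverge like $\frac{L(\lambda-1)}{4}s$, while the first stays bounded near $\overline{f}$. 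Since both limits would have to equal the single value $\overline{f}$, this is a contradiction; the symmetric argument disposes of $L<0$, leaving $\overline{\dot{f}} = 0$.
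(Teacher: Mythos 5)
Your proof is correct, but it takes a genuinely different route from the paper's. The paper treats the whole-line correlation as a single double limit of the interval quantity $\int_a^b x\,\dot{x}\,dt \,/\, \sqrt{(\int_a^b x^2\,dt)(\int_a^b \dot{x}^2\,dt)}$; since that limit is assumed to exist, it suffices to evaluate it along one well-chosen sequence $(a_n,b_n)$, and the paper chooses points where $|x(a_n)|$ and $|x(b_n)|$ are small (such points exist because $\overline{x}=0$ and $x$ is continuous), so that the numerator $\frac{1}{2}(x(b)^2-x(a)^2)$ tends to zero while the unnormalised denominator is non-decreasing and hence bounded below. You instead decompose the correlation into separate Ces\`aro averages and concentrate the work in one lemma --- a differentiable $f$ with a finite two-sided average has $\overline{\dot f}=0$ --- proved by contradiction by comparing the windows $(-s,s)$ and $(-s,\lambda s)$; applying it to $f=x$ and $f=x^2$ kills both $\overline{\dot x}$ and $\overline{x\dot x}$. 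Your lemma buys several things: it gives an actual proof that $\overline{\dot x}=0$ (which the paper only asserts), it handles the two-variable limit with explicit care, and it avoids the paper's dichotomy ``either $x(b)\to 0$ or $x$ has arbitrarily large zeros'', which as literally stated is not exhaustive (only the weaker and correct consequence $\liminf_{b\to\infty}|x(b)|=0$ is needed). The trade-off is in the hypotheses: your reading of ``the correlation exists'' requires $\overline{x^2}$, $\overline{\dot x^2}$ and $\overline{x\dot x}$ to exist separately as finite limits with strictly positive variances, whereas the paper needs only the limit of the ratio to exist; under that weaker reading your decomposition into individual limits is unavailable, so your version of the theorem is marginally less general, though your interpretation of the informal hypothesis is entirely defensible.
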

\begin{proof}
Note that the existence of the correlation implies that x is not constant.
As before, we can take $\overline{x}$ to be zero and prove that $\overline{\dot{x}}$ is also zero.
The correlation between $x$ and $\dot{x}$ is then given by the limit:
\begin{eqnarray*}
c_{x,\dot{x}} & = & \limab
\frac{\frac{1}{b-a} \int_a^b x \, \dot{x} \, dt}%
{\sqrt{(\frac{1}{b-a} \int_a^b x^2 \, dt)
\;
(\frac{1}{b-a} \int_a^b \dot{x}^2 \, dt)}} \\
& = & \limab
\frac{(x(b)^2 - x(a)^2)/2}%
{\sqrt{(\int_a^b x^2 \, dt)
\;
(\int_a^b \dot{x}^2 \, dt)}}
\end{eqnarray*}
Since this limit is assumed to exist, to prove that it is zero it is sufficient to construct some particular sequence of values of $a$ and $b$ tending to $\pm \infty$, along which the limit is zero.

Either $x(b)$ tends to zero as $b \rightarrow \infty$, or (since $\overline{x} = 0$ and $x$ is continuous) there are arbitrarily large values of $b$ for which $x(b) = 0$. In either case, for any $\epsilon > 0$ there exist arbitrarily large values of $b$ such that $|x(b)| < \epsilon$.
Similarly, there exist arbitrarily large negative values $a$ such that $|x(a)| < \epsilon$.
For such $a$ and $b$, the numerator of the last expression for $c_{x,\dot{x}}$ is less than $\epsilon^2/2$. However, the denominator is positive and non-decreasing as $a \rightarrow -\infty$ and $b \rightarrow \infty$. The denominator is therefore bounded below for all large enough $a$ and $b$ by some positive value $\delta$.

If we take a sequence $\epsilon_n$ tending to zero, and for each $\epsilon_n$ take values $a_n$ and $b_n$ as described above, and such that $a_n \rightarrow -\infty$ and $b_n \rightarrow \infty$, then along this route to the limit, the corresponding approximant to the correlation is less than $\epsilon_n/\delta$. This sequence tends to zero, therefore the correlation is zero.
\end{proof}

The conditions that $x(a)=x(b)$ in the first theorem and the existence of $\overline{x}$ in the second are essential. If we take $x = e^t$, which violates both conditions, then $\dot{x}=x$ and the correlation is 1 over every finite time interval.
That $\overline{\dot{x}}$ and $c_{x,\dot{x}}$ exist is a technicality that rules out certain pathological cases such as $x = \sin(\log(1+|t|))$, which are unlikely to arise in any practical application.

We remark that although we do not require them here,
corresponding results hold for discrete time series,
for the same reason in its finite difference form: that
$(x_i+x_{i+1})(x_{i+1}-x_i) = x_{i+1}^2-x_i^2$.

\bibliography{causnoncorr-arXiv}

\end{document}